\newtheorem{theorem}{Theorem}[section]
\newtheorem{lemma}[theorem]{Lemma}
\theoremstyle{definition}
\newtheorem{remark}[theorem]{Remark}
\numberwithin{equation}{section}
\journal{Journal of Physics A: Mathematical and Theoretical}
\begin{document}

\begin{frontmatter}

\title{Dunkl-Supersymmetric orthogonal functions associated with classical orthogonal polynomials}
\tnotetext[label1]{The research of ST is supported by JSPS KAKENHI (Grant Numbers 16K13761, 19H01792) and that of LV by a discovery grant of the Natural Sciences and Engineering Research Council (NSERC) of Canada. The work of AZ is
supported by the National Science Foundation of China (Grant No.11771015).}

%
%

\author[Kyoto]{Yu Luo}
\ead{luo.yu.68e@st.kyoto-u.ac.jp}

\author[Kyoto]{Satoshi Tsujimoto}
\ead{tsujimoto.satoshi.5s@kyoto-u.jp}

\author[LV]{Luc Vinet}

\author[ZA]{Alexei Zhedanov}

\address[Kyoto]{Department of Applied Mathematics and Physics, Graduate School of Informatics, Kyoto University, Yoshida-Honmachi, Sakyo-ku, Kyoto 606-8501, Japan}

\address[LV]{Centre de recherches, math\'ematiques, Universit\'e de Montr\'eal, P.O. Box 6128, Centre-ville Station,
Montr\'eal (Qu\'ebec), H3C 3J7}

\address[ZA]{Department of Mathematics, School of Information, Renmin University of China, Beijing 100872,CHINA}

\begin{abstract}
We consider the eigenvalue problem associated with the Dunkl-type differential operator 
(in which the reflection operator $R$ is involved)
\begin{displaymath}
  \mathcal{L}=\partial_x R+v(x), \quad (v(-x)=-v(x)),
\end{displaymath}
in the context of supersymmetric quantum mechanical models. 
By solving this eigenvalue problem with the help of known exactly solvable potentials, 
we construct several classes of functions satisfying certain orthogonality relations. 
We call them the Dunkl-supersymmetric (Dunkl-SUSY) orthogonal functions. 
These functions can be expressed in terms of the classical orthogonal polynomials (COPs). 
The key feature of these functions is that they appear by pairs, 
i.e., $Q_n(x)$ and $Q_{n}(-x)$ are both the eigenfunctions of $\mathcal{L}$. 
A general formulation of the Dunkl-SUSY orthogonal polynomials is also presented. 

\end{abstract}

\begin{keyword}
Supersymmetric quantum mechanics \sep Dunkl-type differential operator \sep 
Classical orthogonal polynomials

\MSC[2010] primary 33C45 \sep 33C47 \sep secondary 42C05
\end{keyword} 

\end{frontmatter}

\section{Introduction}
\par
Supersymmetric quantum mechanics (SUSY QM)  
has been useful in the study of exactly solvable quantum mechanical models 
\cite{SUSYQM, DbSUSY}. 
In \cite{SUSYQMR}, the authors presented supersymmetric quantum mechanical models in 
one dimension involving differential operators of Dunkl-type (see also \cite{Plyu94,Plyu96}).
In this realization, 
the reflection operator appears in both the supersymmetric Hamiltonian and the supercharge.
The wave functions for two such systems have been obtained in \cite{SUSYQMR} and seen to define orthogonal polynomials that are themselves expressed in terms of Hermite and little -1 Jacobi polynomials respectively. We here propose to pursue the exploration of the orthogonal polynomials that occur as eigenfunctions of such Dunkl supercharges,
specifically of the operator 
\begin{equation}
\label{calL}
\mathcal{L}=\partial_x R+v(x), 
\end{equation}
where $v(-x) = -v(x)$.
%

\par

\par
A Hamiltonian $H$ is said to be supersymmetric if there are supercharges $Q$, $Q^{\dagger}$ 
such that the following superalgebra relations are satisfied 
\begin{equation}
\label{SUSY_algbra}
[Q, H]=0, \quad [Q^{\dagger}, H]=0, \quad
H=\{Q, Q^{\dagger}\}.
\end{equation}
The brackets $[,]$ and $\{,\}$ are called the commutator and the anticommutator, respectively: 
$$
[A, B]=AB-BA, \quad 
\{A, B\}=AB+BA.
$$
If the supercharge $Q$ is self-adjoint, i.e., $Q^{\dagger}=Q$, 
it follows from (\ref{SUSY_algbra}) that the $H=2Q^2$, 
and 
the model is said to be $N=1/2$ supersymmetric. 
Realizations of $N=1/2$ supersymmetric systems have been obtained in \cite{SUSYQMR}
by taking the supercharge as the following Dunkl-type differential operator: 
\begin{equation}
\label{D_supercharge}
Q=2^{-\frac{1}{2}}(\partial_x R+u(x)R+v(x)),
\end{equation}
where $u(x)$ is even, $v(x)$ is odd (i.e., $u(-x)=u(x)$, $v(-x)=-v(x)$), 
and the operator $R$ is the reflection operator which acts on $x$ as 
$Rf(x)=f(-x)$.
It is clear that $Q$ is self-adjoint, $Q^{\dagger}=Q$, 
and the Hamiltonian $H$ is then
\begin{equation}
\label{D_Ham}
H=-\partial^2_x+(u^2(x)+v^2(x)+u'(x))-v'(x)R. 
\end{equation}
Notice that if $u(x)=0$, then the supercharge $Q$ and the Hamiltonian $H$ become 
\begin{equation}
\label{QH}
Q=2^{-\frac{1}{2}}(\partial_x R+v(x)), \quad 
H=-\partial^2_x+v^2(x)-v'(x)R,
\end{equation}
and they satisfy
\begin{equation}
\label{QH1}
\{Q, R\}=0, \quad [H, R]=0.
\end{equation}
The assumption that $v(x)$ is odd is essential for the relations (\ref{QH1}) to be achieved. 
Consequently, these relations together with (\ref{SUSY_algbra}) imply that, for $Q$ and $H$ given in (\ref{QH}): 
\begin{enumerate}
\item[(a)] the operators $Q$ and $H$ share the same eigenfunctions: $Q\psi(x)=\nu\psi(x)$, $H\psi(x)=E\psi(x)$, 
where $E=2\nu^2$; 
\item[(b)] their eigenfunctions appear in pairs $\psi(x), \psi(-x)$: 
$$
Q\psi(x)=\nu\psi(x), \quad
Q\psi(-x)=-\nu\psi(-x);
$$
\begin{equation}
\label{H_eq}
H\psi(x)=E\psi(x), \quad
H\psi(-x)=E\psi(-x).
\end{equation}
\end{enumerate}

\par 
We shall focus on models described by (\ref{QH}) in the remainder of this paper 
and will assume that the operator $\mathcal{L}$ is non-degenerate, 
i.e., all that eigenvalues of $\mathcal{L}$ are distinct. 
For results on the eigenvalue problem related with the most general first-order Dunkl-type differential operator 
$$
L=F_{0}(x)+F_{1}(x)R+G_{0}(x)\partial_x+G_{1}(x)\partial_x R
$$
with arbitrary functions $F_{0}(x)$, $F_{1}(x)$, $G_{0}(x)$, $G_{1}(x)$ 
one can refer to \cite{BoThmDkl}. 

\par

\par
The rest of this paper will unfold as follows. 

\par
In Section 2, we determine the general properties of the eigenfunctions of an operator such as 
$H$ that commute with the reflection operator. 
The connection with supersymmetric quantum mechanics will be made manifest at the same time. 
Concentrating in Section 3 on the special Dunkl operators ${\mathcal L}$, we shall exploit the exact solvability of certain 
shape-invariant potentials to obtain explicitly a number polynomial families arising in the wavefunctions. 
These families will be presented in Section 4 without reference to the SUSY models from which they originated and 
the orthogonality relations will be specified. The essential properties of Dunkl-supersymmetric OPs will then 
be identified from the examples thus collected so as to provide in Section 5 a general characterization of 
these polynomials in terms of sets of symmetric polynomials as well as their recurrence relations. 
A summary of the results and a short outlook will be found in the Conclusion.

\section{The eigenvalue problem of a Hamiltonian with reflection}
\par
In this section, we give an analysis on the eigenvalue problem related to the Hamiltonian 
\begin{equation}
\label{H}
H=-\partial^2_x+v^2(x)-v'(x)R
\end{equation} 
where $v(x)$ is odd, $v(-x)=-v(x)$. 
First, let us notice that $H$ can be rewritten into the following two Hamiltonians 
\begin{equation}
\label{H_12}
  \begin{split}
  &H_1=-\partial^2_x+v^2(x)-v'(x):=-\partial^2_x+V_1(x), \\
  &H_2=-\partial^2_x+v^2(x)+v'(x):=-\partial^2_x+V_2(x),
  \end{split}
\end{equation}
by restricting $H$ on even or odd functions, respectively. 
In this way, we can avoid coping with the reflection $R$ directly. 
The potentials $V_1(x)$ and $V_2(x)$ are {\bf even} under the assumption that $v(x)$ is odd, 
and the relation $V_2(x)=V_1(x)+2v'(x)$ holds.
In this case the function $v(x)$ plays the role of the superpotential, 
the potentials $V_1(x)$ and $V_2(x)$ are known as a pair of supersymmetric partner potentials \cite{SUSYQM}. 

\par
In view of the property (\ref{H_eq}), let us 
assume that $H$ has a discrete sequence of eigenfunctions: 
$$
H\psi_{n}(x)=E_{n}\psi_{n}(x), \quad n=0,\pm 1, \pm 2, \ldots
$$
where $E_0=0$, $E_{-n}=E_{n}$, $\psi_{-n}(x)=\psi_{n}(-x)$, $n=1,2,\ldots$.
If we apply the decompositions 
\begin{equation}
\label{psi_e+o}
\psi_{\pm n}(x)=e_{n}(x)\pm o_{n}(x), 
\end{equation} 
where $e_{n}(x)$ and $o_{n}(x)$ are the even and odd components of $\psi_{\pm n}(x)$, respectively, 
then the eigenvalue equation of $H$ can be rewritten into 
$$
H\psi_{\pm n}(x)=H_{1}e_{n}(x)\pm H_{2}o_{n}(x)=E_{n}(e_{n}(x)\pm o_{n}(x))
$$
which lead to the following eigenvalue equations 
\begin{equation}
\label{H_12_EIeq}
H_1e_n(x)=E_ne_n(x),   \quad
H_2o_n(x)=E_no_n(x).
\end{equation}

\par

\begin{lemma}
\label{lemma_psi}
If there exist a sequence of eigenfunctions of $H_1$ which are all even and 
a sequence of eigenfunctions of $H_2$ which are all odd, 
and their eigenvalues satisfy the condition (\ref{EV_eo}), 
then the eigenfunctions of $H$ can be expressed as linear combinations of those of $H_1$ and $H_2$. 
\end{lemma}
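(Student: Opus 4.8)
The plan is to prove the statement by direct verification, exploiting the elementary fact that on functions of definite parity the reflection operator $R$ acts as a scalar: $Rf=f$ when $f$ is even and $Rf=-f$ when $f$ is odd. Consequently the term $-v'(x)R$ appearing in $H$ (see (\ref{H})) contributes $-v'(x)$ on even functions and $+v'(x)$ on odd functions. Comparing with (\ref{H_12}), this says precisely that $H$ restricted to the even subspace coincides with $H_1$, while $H$ restricted to the odd subspace coincides with $H_2$. This single structural observation, already implicit in the passage from the eigenvalue equation of $H$ to (\ref{H_12_EIeq}), is the fact on which the whole argument rests.

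First I would use the hypothesis together with the eigenvalue condition (\ref{EV_eo}) to pair the data: for each admissible index $n$ there is an even eigenfunction $e_n$ of $H_1$ and an odd eigenfunction $o_n$ of $H_2$ sharing a common eigenvalue $E_n$, i.e. $H_1 e_n=E_n e_n$ and $H_2 o_n=E_n o_n$. Next I would form the candidate combinations $\psi_{\pm n}(x)=e_n(x)\pm o_n(x)$ as in (\ref{psi_e+o}) and compute $H\psi_{\pm n}$ term by term. Using the parity action of $R$ above, $H$ acts as $H_1$ on the even piece $e_n$ and as $H_2$ on the odd piece $o_n$, so that
$$
H(e_n\pm o_n)=H_1 e_n\pm H_2 o_n=E_n e_n\pm E_n o_n=E_n(e_n\pm o_n).
$$
Hence each $\psi_{\pm n}$ is an eigenfunction of $H$ with eigenvalue $E_n$, establishing that the prescribed even eigenfunctions of $H_1$ and odd eigenfunctions of $H_2$ indeed combine into eigenfunctions of $H$.

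To see conversely that \emph{every} eigenfunction of $H$ arises in this way, I would split an arbitrary eigenfunction into its even and odd components as in (\ref{psi_e+o}). Since $H$ commutes with $R$ (recall (\ref{QH1})), these components are again eigenfunctions of $H$ for the same eigenvalue, and by the parity reduction they solve $H_1 e=E e$ and $H_2 o=E o$ respectively, which is exactly (\ref{H_12_EIeq}). Thus the eigenfunctions of $H$ are recovered precisely as the linear combinations $e_n\pm o_n$, as claimed.

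The computation itself is routine; the delicate point is the bookkeeping enforced by condition (\ref{EV_eo}). What must be checked is that the even spectrum of $H_1$ and the odd spectrum of $H_2$ match index by index, so that a genuine common eigenvalue $E_n$ exists for each pair — the partner structure $V_2(x)=V_1(x)+2v'(x)$ being what guarantees this matching above the ground level. The one level requiring separate attention is $E_0=0$: there the even ground state of $H_1$ has no odd counterpart in $H_2$, so $o_0\equiv 0$ and $\psi_0=e_0$ is a single, non-degenerate eigenfunction, consistent with $\psi_{-0}=\psi_0$. Handling this ground-state asymmetry correctly is the main obstacle to a clean statement, and it is exactly what the hypothesis (\ref{EV_eo}) is meant to encode.
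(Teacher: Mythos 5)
Your proof is correct and takes essentially the same approach as the paper: using that $H$ restricted to even (resp.\ odd) functions coincides with $H_1$ (resp.\ $H_2$), you pair eigenfunctions matched by (\ref{EV_eo}) and verify $H(e_n\pm o_n)=E_n(e_n\pm o_n)$, which is exactly the paper's construction (\ref{EVP_H_sol}). Your converse step --- splitting an arbitrary eigenfunction of $H$ into parity components and using $[H,R]=0$ from (\ref{QH1}) to see that each component is again an eigenfunction, hence an eigenfunction of $H_1$ or $H_2$ --- is a worthwhile completeness argument that the paper's proof leaves implicit; only your closing paragraph overstates matters, since the matching of spectra encoded in (\ref{EV_eo}) is a hypothesis of the lemma rather than something the proof must establish.
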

\begin{proof}
Denote the eigenfunctions and eigenvalues of $H_1$ and $H_2$ by 
$\psi^{(1)}_{n}(x)$, $E^{(1)}_{n}$ and $\psi^{(2)}_{n}(x)$, $E^{(2)}_{n}$, respectively. 
Let $\psi^{(1)}_{N(n)}(x)$ be the even eigenfunctions of $H_1$ and 
$\psi^{(2)}_{M(n)}(x)$ be the odd eigenfunctions of $H_2$, 
where the indices $N(n)$ and $M(n)$ are both increasing, namely, 
$$
0\leq N(0)<N(1)<\cdots; \quad
0\leq M(0)<M(1)<\cdots.
$$
Then, by defining 
$$
e_{n}(x)=C^{(e)}\psi^{(1)}_{N(n)}(x), \quad
o_{n}(x)=C^{(o)}\psi^{(2)}_{M(n)}(x)
$$
with arbitrary constants $C^{(e)}$, $C^{(o)}$ which are not identically zero, 
and imposing the condition 
\begin{equation}
\label{EV_eo}
E^{(1)}_{N(n)}=E^{(2)}_{M(n)}
\end{equation}
one can solve the eigenvalue problem of $H$ as follow: 
\begin{equation}
\label{EVP_H_sol}
\psi_{n}(x)=C^{(e)}\psi^{(1)}_{N(n)}(x)+C^{(o)}\psi^{(2)}_{M(n)}(x), 
\quad E_{n}=E^{(1)}_{N(n)}=E^{(2)}_{M(n)}.
\end{equation}
\end{proof}

\begin{remark}
\label{rmk_SL}
\par
Recall that the eigenvalue problem of the operator $H_1$ ($H_2$) on some interval ($a,b$) with boundary conditions 
is a Sturm-Liouville problem. 
For example, the eigenvalue problem 
\begin{equation*}
\label{SHReq}
  \begin{split}
	\mathcal{H}\phi(x)=(-\partial^2_x+V(x))\phi(x)&=\lambda\phi(x), \quad x\in (a,b) \\
	c_{a}\phi(a)+d_{a}\phi'(a)&=0, \\
	c_{b}\phi(b)+d_{b}\phi'(b)&=0
  \end{split}
\end{equation*}
is a regular Sturm-Liouville problem. 
According to the Sturm-Liouville theory, 
if $V(x)$ is continuous and regular in ($a,b$) ,
then 
\begin{itemize}
\item[(1)] Eigenvalues of the operator $\mathcal{H}$ are real, simple and non-degenerate 
		(eigenvalues of different eigenfunctions are distinct, $\lambda_m\neq \lambda_n$, $\forall m\neq n$). 
		Further, the eigenvalues form an infinite sequence, and can be ordered according to increasing 
		magnitude so that $\lambda_0<\lambda_1<\cdots$ and $\lim_{n\rightarrow\infty}\lambda_n=\infty$.
\item[(2)] The eigenfunctions of $\mathcal{H}$ are orthogonal: 
$$
\int^{b}_{a}\phi_{m}(x)\phi_{n}(x)dx=0, \quad m\neq n.
$$
\end{itemize}
\end{remark}

\begin{remark}
\label{rmk_R}
\par
Since the reflection $R$ is involved in $H$, 
the range of the coordinate $x$ must be invariant under $R$, namely, it should be ($-a, a$) 
($a$ can be finite or infinite). 
Thus, the range of the coordinate $x$ in $H_1$ and $H_2$ should both be ($-a, a$). 
From the construction of $H_1$ and $H_2$ it is easily seen that once $H_1$ satisfies this condition, 
then the same holds for $H_2$. 
We shall distinguish the following two cases: 
(A) a ``genuine'' ($-a,a$) model and (B) two copies of ($0,a$) model. 
Note that the potential $V_1(x)$ ($V_2(x)$) is even here.

\par
In case (A), let us consider the operator $H_1$ on ($-a,a$). 
Assume that 
$H_1$ has an infinite sequence of eigenfunctions, 
$H_1\psi^{(1)}_{n}(x)=E^{(1)}_{n}\psi^{(1)}_{n}(x)$ ($n=0,1,\ldots$) 
with $0=E_0<E_1<\cdots$, 
then the eigenfunctions and eigenvalues of $H_2$ follow from the relations (\ref{rela_1}) and (\ref{rela_2}) automatically. 
Thus those of $H$ can be derived using Lemma \ref{lemma_psi}. 

\par
In case (B), let us consider the operator $H_1$ on ($0,a$). 
In this case, we assume that the potentials $V_1(x)$, $v(x)$ are singular at $x=0$, 
and $H_1$ has an infinite sequence of eigenfunctions, 
$H_1\psi^{(1)}_{n}(x)=E^{(1)}_{n}\psi^{(1)}_{n}(x)$ ($n=0,1,2,\ldots$) with $0=E_0<E_1<\cdots$. 
In particular, we assume that $\psi^{(1)}_{n}(x)$ satisfies $\psi^{(1)}_{n}(0)=(\psi^{(1)}_{n})'(0)=0$. 
Define the Hamiltonian $\tilde{H}_1$ of a new model on ($-a,a$) by $\tilde{H}_1=H_1$, 
which is singular at $x=0$. 
By defining the operators $\tilde{A}$ and $\tilde{A}^{\dagger}$ of new model on ($-a, a$) as 
$\tilde{A}=A$ and $\tilde{A}^{\dagger}=A^{\dagger}$, we have $\tilde{H}=\tilde{A}^{\dagger}\tilde{A}$. 
Let us define $\tilde{\psi}^{(1)}_n(x)$ on ($-a,a$) as follows: 
\begin{equation}
\label{tilde_psi}
\tilde{\psi}^{(1)}_{2n}(x)=
  \begin{cases}
	\psi^{(1)}_{n}(x) & (0\leq x<a) \\
	\psi^{(1)}_{n}(-x) & (-a<x<0)
  \end{cases}, \quad
\tilde{\psi}^{(1)}_{2n+1}(x)=
  \begin{cases}
	\psi^{(1)}_{n}(x) & (0\leq x<a) \\
	-\psi^{(1)}_{n}(-x) & (-a<x<0)
  \end{cases}
\end{equation}
which satisfy 
\begin{equation}
\label{tilde_psi_1}
\tilde{\psi}^{(1)}_n(-x)=(-1)^{n}\tilde{\psi}^{(1)}_n(x).
\end{equation}
Then we have 
$$
\tilde{H}_1\tilde{\psi}^{(1)}_n(x)=\tilde{E}_{n}\tilde{\psi}^{(1)}_n(x), \quad 
\tilde{E}_{2n}=\tilde{E}_{2n+1}=E_{n} \quad (n\geq 0). 
$$
The energy eigenvalues are doubly degenerate, which is allowed by the singularity of $x=0$. 
The eigenfunctions $\tilde{\psi}^{(1)}_{2n}(x)$ and $\tilde{\psi}^{(1)}_{2m+1}(x)$ are orthogonal: 
$\int^{a}_{-a}\tilde{\psi}^{(1)}_{2n}(x)\tilde{\psi}^{(1)}_{2m+1}(x)dx=0$. 
Again, using the relations (\ref{rela_1}), (\ref{rela_2}) and Lemma \ref{lemma_psi} one can obtain 
the eigenfunctions and eigenvaules of $H$. 
\end{remark}

\section{Supersymmetric quantum mechanics and shape invariant even potentials}
\par
In the theory of SUSY QM \cite{SUSYQM}, 
the Hamiltonians $H_1$ and $H_2$ can be factorized as follow: 
$$
H_1=A^{\dagger}A+E^{(1)}_{0}, \quad
H_2=AA^{\dagger}+E^{(1)}_{0},
$$
where $A^{\dagger}$ is the conjugation of $A$, 
$$
A=\partial_x+v(x), \quad A^{\dagger}=-\partial_x+v(x). 
$$

\par
If we consider the unbroken SUSY where the ground state energy is zero, namely, $E^{(1)}_{0}=0$, 
then we can choose $v(x)=-(\ln\psi^{(1)}_{0}(x))'$ such that $H\psi^{(1)}_{0}(x)=A^{\dagger}A\psi^{(1)}_{0}(x)=0$.
This follows from the fact that $A$ annihilates the ground state wave function $\psi^{(1)}_{0}(x)$. 

\par
Again, let us denote the eigenfunctions and eigenvalues of $H_1$ and $H_2$ by 
$\psi^{(1)}_{n}(x)$, $E^{(1)}_{n}$ and $\psi^{(2)}_{n}(x)$, $E^{(2)}_{n}$, respectively. 
They satisfy the equations 
$$
H_1\psi^{(1)}_{n}(x)=A^{\dagger}A\psi^{(1)}_{n}(x)=E^{(1)}_{n}\psi^{(1)}_{n}(x), \quad 
H_2A\psi^{(1)}_{n}(x)=AA^{\dagger}A\psi^{(1)}_{n}(x)=E^{(1)}_{n}A\psi^{(1)}_{n}(x),
$$
$$\hspace{4mm}
H_2\psi^{(2)}_{n}(x)=AA^{\dagger}\psi^{(2)}_{n}(x)=E^{(2)}_{n}\psi^{(2)}_{n}(x), \quad 
H_1A^{\dagger}\psi^{(2)}_{n}(x)=A^{\dagger}AA^{\dagger}\psi^{(2)}_{n}(x)=E^{(2)}_{n}A^{\dagger}\psi^{(2)}_{n}(x),
$$
from which it follows that 
\begin{equation}
\label{rela_1}
E^{(2)}_{n}=E^{(1)}_{n+1}, \quad E^{(1)}_{0}=0, 
\end{equation}
\begin{equation}
\label{rela_2}
\psi^{(2)}_{n}(x)=(C^{(1)}_{n})^{-1} A\psi^{(1)}_{n+1}(x), 
\end{equation}
\begin{equation}
\label{rela_3}
\psi^{(1)}_{n+1}(x)=(C^{(2)}_{n+1})^{-1} A^{\dagger}\psi^{(2)}_{n}(x),
\end{equation}
where the coefficients $C^{(1)}_{n}$ and  $C^{(2)}_{n}$ satisfy the condition $C^{(1)}_{n}C^{(2)}_{n+1}=E^{(1)}_{n+1}$. 
Note that the relations (\ref{rela_2}), (\ref{rela_3}) and the definitions of $A, A^{\dagger}$ imply that 
$\psi^{(1)}_{n+1}(x)$ and $\psi^{(2)}_{n}(x)$ have different parity, namely, 
if $\psi^{(1)}_{n+1}(x)$ is even, then $\psi^{(2)}_{n}(x)$ is odd; 
if $\psi^{(1)}_{n+1}(x)$ is odd, then $\psi^{(2)}_{n}(x)$ is even.

\par
In what follows we adopt the assumption in the proof of Lemma \ref{lemma_psi} 
that $\psi^{(1)}_{N(n)}(x)$ are even and $\psi^{(2)}_{M(n)}(x)$ are odd for $n=0,1,\ldots$. 
This can be achieved through some minor trick. 
(In fact, if $\psi^{(1)}_{n}(x)$ and $\psi^{(2)}_{n}(x)$ do not have definite parity, 
we can redefine them by 
$$
\tilde{\psi}^{(1)}_{n}(x):=\frac{1}{2}(\psi^{(1)}_{n}(x)+\psi^{(1)}_{n}(-x)), \quad
\tilde{\psi}^{(2)}_{n}(x):=\frac{1}{2}(\psi^{(2)}_{n}(x)-\psi^{(2)}_{n}(-x))
$$
which are still eigenfunctions of $H_1$ and $H_2$ with eigenvalues $E^{(1)}_{n}$ and $E^{(2)}_{n}$, respectively. 
This is because $\psi^{(1)}_{n}(-x)$ and $\psi^{(2)}_{n}(-x)$ are also eigenfunctions of 
$H_1$ and $H_2$ with eigenvalues $E^{(1)}_{n}$ and $E^{(2)}_{n}$, respectively. 
Obviously, the new defined eigenfunctions are even and odd, respectively.)

\par
If we let $v(x)$ in the operator $\mathcal{L}$ be defined by 
$v(x)=-(\ln\psi^{(1)}_{0}(x))'$, 
then the operators $A$ and $A^{\dagger}$ turn out to be the restrictions of $\mathcal{L}$ 
on even and odd functions, 
respectively. 
It follows that 
\begin{equation}
\label{L_psi_1}
\mathcal{L}\psi^{(1)}_{N(n)}(x)=A\psi^{(1)}_{N(n)}(x)=C^{(1)}_{N(n)-1}\psi^{(2)}_{N(n)-1}(x), 
\end{equation}
\begin{equation}
\label{L_psi_2}
\mathcal{L}\psi^{(2)}_{M(n)}(x)=A^{\dagger}\psi^{(2)}_{M(n)}(x)=C^{(2)}_{M(n)+1}\psi^{(1)}_{M(n)+1}(x). 
\end{equation}
From these relations we can derive the eigenfunctions of $\mathcal{L}$.

\begin{lemma}
\label{lemma_EV_L}
Let $C^{(1)}_{N(n)-1}=C^{(2)}_{N(n)}=\sqrt{E^{(1)}_{N(n)}}$, $n=0,1,\ldots$, and 
\begin{equation}
\label{psi_2}
\psi^{(2)}_{N(n)-1}(x)
=\left(\sqrt{E^{(1)}_{N(n)}}\right)^{-1}A\psi^{(1)}_{N(n)}(x).
\end{equation}
If we further assume that $M(n)=N(n)-1$, $n=0,1,\ldots$, 
then the eigenvalue problem $\mathcal{L}\psi_{\pm n}(x)=\lambda_{\pm n}\psi_{\pm n}(x)$ can be solved as follow: 
\begin{equation}
\label{opL_sol}
\psi_{\pm n}(x)=\psi^{(1)}_{N(n)}(x)\pm \psi^{(2)}_{N(n)-1}(x), \quad 
\lambda_{\pm n}=\pm\sqrt{E^{(1)}_{N(n)}}. 
\end{equation}
\end{lemma}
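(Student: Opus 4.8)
The plan is to verify directly that the functions $\psi_{\pm n}$ defined in (\ref{opL_sol}) are eigenfunctions of $\mathcal{L}$, by applying $\mathcal{L}$ to them and exploiting the fact, established just before the statement, that $\mathcal{L}$ restricts to $A$ on even functions and to $A^{\dagger}$ on odd functions. Since $\mathcal{L}$ is linear, applying it to $\psi_{\pm n}=\psi^{(1)}_{N(n)}(x)\pm\psi^{(2)}_{N(n)-1}(x)$ splits into an even and an odd contribution, so the whole computation reduces to controlling how $\mathcal{L}$ acts on each of the two summands separately.

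First I would record the parities. The eigenfunction $\psi^{(1)}_{N(n)}$ is even by hypothesis, so $\mathcal{L}\psi^{(1)}_{N(n)}=A\psi^{(1)}_{N(n)}$. Because $v$ is odd and $A=\partial_x+v$ maps even functions to odd ones, the function $\psi^{(2)}_{N(n)-1}$ defined in (\ref{psi_2}) is odd, whence $\mathcal{L}\psi^{(2)}_{N(n)-1}=A^{\dagger}\psi^{(2)}_{N(n)-1}$. This is the point where the hypothesis $M(n)=N(n)-1$ enters: it identifies the odd partner produced by $A$ with the odd $H_2$-eigenfunction $\psi^{(2)}_{M(n)}$ appearing in (\ref{L_psi_2}).

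Next I would assemble the two intertwining relations. From (\ref{L_psi_1}) with the normalization $C^{(1)}_{N(n)-1}=\sqrt{E^{(1)}_{N(n)}}$ one gets $\mathcal{L}\psi^{(1)}_{N(n)}=\sqrt{E^{(1)}_{N(n)}}\,\psi^{(2)}_{N(n)-1}$, which is simply (\ref{psi_2}) solved for $A\psi^{(1)}_{N(n)}$. From (\ref{L_psi_2}) with $M(n)=N(n)-1$ and $C^{(2)}_{N(n)}=\sqrt{E^{(1)}_{N(n)}}$ one gets $\mathcal{L}\psi^{(2)}_{N(n)-1}=\sqrt{E^{(1)}_{N(n)}}\,\psi^{(1)}_{N(n)}$. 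I would then pause to confirm that this symmetric choice is admissible, i.e.\ that it respects the constraint $C^{(1)}_m C^{(2)}_{m+1}=E^{(1)}_{m+1}$: with $m=N(n)-1$ this reads $\sqrt{E^{(1)}_{N(n)}}\cdot\sqrt{E^{(1)}_{N(n)}}=E^{(1)}_{N(n)}$, which indeed holds.

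Finally, combining the two relations and using linearity gives
\begin{equation*}
\mathcal{L}\psi_{\pm n}=\mathcal{L}\psi^{(1)}_{N(n)}\pm\mathcal{L}\psi^{(2)}_{N(n)-1}
=\sqrt{E^{(1)}_{N(n)}}\left(\psi^{(2)}_{N(n)-1}\pm\psi^{(1)}_{N(n)}\right)
=\pm\sqrt{E^{(1)}_{N(n)}}\,\psi_{\pm n},
\end{equation*}
which is the asserted eigenvalue equation with $\lambda_{\pm n}=\pm\sqrt{E^{(1)}_{N(n)}}$. There is no serious obstacle here: the content of the lemma is precisely that $\mathcal{L}$ exchanges the even and odd sectors, mapping $\psi^{(1)}_{N(n)}$ and $\psi^{(2)}_{N(n)-1}$ into one another with a single common factor. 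The only thing demanding care is ensuring that this factor is the same in both directions, which is why the balanced normalization $C^{(1)}_{N(n)-1}=C^{(2)}_{N(n)}=\sqrt{E^{(1)}_{N(n)}}$ is imposed; an asymmetric choice would still produce $\mathcal{L}^2$-eigenfunctions but not genuine $\mathcal{L}$-eigenfunctions. As a consistency check one may note that $H=\mathcal{L}^2$, so $\lambda_{\pm n}^2=E^{(1)}_{N(n)}$ recovers the $H$-eigenvalue $E_n$ of Lemma \ref{lemma_psi}.
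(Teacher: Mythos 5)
Your proposal is correct and follows essentially the same route as the paper: both verify the eigenvalue equation directly by applying $\mathcal{L}$ to $\psi^{(1)}_{N(n)}\pm\psi^{(2)}_{N(n)-1}$ and invoking the intertwining relations (\ref{L_psi_1}) and (\ref{L_psi_2}) with $M(n)=N(n)-1$. The only (immaterial) difference is that the paper first writes the eigenvalue equation for a general normalization, with relative coefficient $\sqrt{C^{(1)}_{N(n)-1}/C^{(2)}_{N(n)}}$, and then specializes to the balanced choice $C^{(1)}_{N(n)-1}=C^{(2)}_{N(n)}=\sqrt{E^{(1)}_{N(n)}}$, whereas you substitute that choice from the outset; your added consistency checks (the constraint $C^{(1)}_{n}C^{(2)}_{n+1}=E^{(1)}_{n+1}$ and the relation to $H=\mathcal{L}^2$) are sound but not needed.
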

\begin{proof}
The condition $M(n)=N(n)-1$ is equivalent with (\ref{EV_eo}) in view of the relation (\ref{rela_1}), 
then it follows from Lemma \ref{lemma_psi} that the eigenfunctions of $H$ as well as those of $\mathcal{L}$ 
can be expressed in terms of the linear combination of $\psi^{(1)}_{n}(x)$ and $\psi^{(2)}_{n}(x)$. 
Using this condition and the relations (\ref{L_psi_1}), (\ref{L_psi_2}) 
one can easily check the following eigenvalue equations 
\begin{equation*}
\label{opL_psi_12}
\mathcal{L}\left(\psi^{(1)}_{N(n)}(x)\pm \sqrt{C^{(1)}_{N(n)-1}/C^{(2)}_{N(n)}}\psi^{(2)}_{N(n)-1}(x)\right)=
\pm\sqrt{E^{(1)}_{N(n)}}\left(\psi^{(1)}_{N(n)}(x)\pm \sqrt{C^{(1)}_{N(n)-1}/C^{(2)}_{N(n)}}\psi^{(2)}_{N(n)-1}(x)\right).
\end{equation*}
Then (\ref{opL_sol}) follows immediately from the assumption $C^{(1)}_{N(n)-1}=C^{(2)}_{N(n)}=\sqrt{E^{(1)}_{N(n)}}$. 
\end{proof}

\subsection{Shape invariant even potentials}
\par
It is now clear that once the even potentials $V_1(x), V_2(x)$ and 
their corresponding eigenfunctions and eigenvalues are known, 
then the eigenfunctions and eigenvalues of $\mathcal{L}$ with $v(x)$ given by 
the superpotential related to $V_1(x), V_2(x)$ 
follow automatically from Lemma \ref{lemma_EV_L}.

\par
A good class of potentials are the shape invariant potentials which satisfy the condition
\begin{equation}
\label{cond_SIP}
V_2(x;a_1)=V_1(x;a_2)+R(a_1),
\end{equation}
where $a_1$ is a set of parameters, $a_2$ is a translation of $a_1$, 
and it follows that 
$$
R(a_1)=V_2(x;a_1)-V_1(x;a_2)=V_1(x;a_1)-V_1(x;a_2)+2v'(x;a_1).
$$ 

\par
Unless otherwise stated, for any function $f(x)$ appear later we default $f(x)$ stands for $f(x;a_1)$, 
in other words, $a_1$ and $a_2$ must appear simultaneously. 
The condition (\ref{cond_SIP}) implies that 
$$
H_{2}\psi^{(1)}_{m}(x;a_2)=[E^{(1)}_{m}(a_2)+R(a_1)]\psi^{(1)}_{m}(x;a_2).
$$
By comparing this with the eigenvalue equation 
$H_{2}\psi^{(2)}_{n}(x;a_1)=E^{(2)}_{n}(a_1)\psi^{(2)}_{n}(x;a_1)$ 
we can conclude that if $E^{(2)}_{n}(a_1)=E^{(1)}_{m}(a_2)+R(a_1)$ holds for some indices $m$ and $n$, 
then 
\begin{equation}
\label{psi_12_0}
\psi^{(2)}_{n}(x;a_1)\propto \psi^{(1)}_{m}(x;a_2). 
\end{equation}

In particular, if $m=n$, then (\ref{psi_12_0}) becomes (\ref{psi_12}) and it means that 
the eigenfunctions of $H_{1}$ and $H_{2}$ coincide through a translation on certain parameters. 
Fortunately, it turns out that this is true for all the examples we shall consider in this paper (see Remark \ref{even_EVP}). 
Combining the relations (\ref{psi_2}) and (\ref{psi_12}), then we have 
\begin{equation}
\label{psi_1_a2}
\psi^{(2)}_{N(n)-1}(x;a_1)=\left(\sqrt{E^{(1)}_{N(n)}(a_1)}\right)^{-1}A\psi^{(1)}_{N(n)}(x;a_1)
\propto\psi^{(1)}_{N(n)-1}(x;a_2).
\end{equation}
Recall that $\psi^{(2)}_{N(n)-1}(x;a_1)$ is odd and $\psi^{(1)}_{N(n)}(x;a_1)$ is even, 
thus $\psi^{(1)}_{N(n)-1}(x;a_1)$ is odd too since the translation on the parameter(s) $a_1$ does not change the parity in $x$. 
So $\psi^{(1)}_{n}(x)$ is symmetric: 
\begin{equation}
\label{psi_1_sym}
\psi^{(1)}_{n}(-x)=(-1)^{n}\psi^{(1)}_{n}(x). 
\end{equation}
And $\psi^{(2)}_{n}(x)$ should also be symmetric due to the relation (\ref{rela_2}). 

\par
A list of shape invariant potentials and of the corresponding wavefunctions 
related with supersymmetric quantum mechanics is presented in \cite{SUSYQM} (Table 4.1).
We can readily obtain from this table the even potentials 
by putting restrictions on certain parameters. 
The results are given in Table 1 (which is split in two parts).
Specifically, the examples of 
{\bf shifted oscillator}, {\bf Scarf II or Rosen-Morse II (hyperbolic)} and {\bf Scarf I} potentials
belong to case (A) in Remark \ref{rmk_R}
while the examples of {\bf 3d oscillator}, {\bf generalized P\"oschl-Teller} and {\bf P\"oschl-Teller} potentials 
belong to case (B). 
For convenience we shall call them the type (A) examples and the type (B) examples, respectively. 

\begin{remark}
\label{even_EVP}
From Table 1 one can see that in all the examples the relation 
$$
E_{n}^{(1)}(a_2)+R(a_1)=E^{(2)}_n(a_1),
$$ 
holds, which leads to 
\begin{equation}
\label{psi_12}
\psi^{(2)}_{n}(x;a_1)\propto\psi^{(1)}_{n}(x;a_2).
\end{equation}
The above relation together with Lemma \ref{lemma_EV_L} implies that 
the eigenfunctions of $\mathcal{L}$ with $v(x)$ given by 
the superpotential in Table 1 can be written as:
\begin{equation} 
\label{psi_a_1}
\psi_{\pm n}(x;a_1)=\psi^{(1)}_{N(n)}(x;a_1)\pm \psi^{(2)}_{N(n)-1}(x;a_1)
=\psi^{(1)}_{N(n)}(x;a_1)\pm \tilde{C}_{n}\psi^{(1)}_{N(n)-1}(x;a_2), 
\end{equation}
where it follows from (\ref{psi_1_a2}) that 
\begin{equation} 
\label{psi_a_2_C}
\tilde{C}_{n}=\left(\sqrt{E^{(1)}_{N(n)}(a_1)}\right)^{-1}
\frac{A\psi^{(1)}_{N(n)}(x;a_1)}{\psi^{(1)}_{N(n)-1}(x;a_2)}.
\end{equation}

\begin{table}[h]
\centering
\caption{Shape invariant even potentials derived from \cite{SUSYQM} (Table 4.1), 
where the parameters $a_1$ and $a_2$ are related by a translation $a_2=a_1+\alpha$. 
Here we replaced the parameter $\omega$ in \cite{SUSYQM} with $\omega=2s^2$ 
for our convenience. Unless specified explicitly otherwise, the parameters $A,B,\alpha,s,l$ are all taken $\geq 0$,  
and the range of potentials is $-\infty\leq x\leq\infty$, $0\leq r\leq\infty$.}
\label{even potentials}
\begin{adjustbox}{width=\textwidth}
\begin{tabulary}{1.2\textwidth}{LCCCC} \toprule
Name of potential 		      & $v(x)$	      		  & $V_1(x;a_1)$ 				        & $y$ 	            	 & $\psi^{(1)}_n(y)$ \\ \midrule
shifted oscillator 		      & $s^{2}x$ 		      	  & $s^{2}(s^{2}x^2-1)$ 					& $sx$ 	                  & $e^{-\frac{1}{2}y^2}H_{n}(y)$  \\[4pt] \hline 
Scarf II or   	      	        &					  &								&				      &   \\
Rosen-Morse II			& $A\tanh(\alpha x)$ 	  & $A^2-A(A+\alpha)$sech$^{2}(\alpha x)$ & $\sinh(\alpha x)$ 	 & $i^{n}(y^2+1)^{-\frac{A}{2\alpha}}$  \\
(hyperbolic)			&					  &								&		& \hspace{-3mm}$\cdot P_{n}^{(-\frac{A}{\alpha}-\frac{1}{2},-\frac{A}{\alpha}-\frac{1}{2})}(iy)$  \\[4pt] \hline
Scarf I 				& $A\tan(\alpha x)$ 		  & $-A^2+A(A-\alpha)\sec^{2}(\alpha x)$ 	& $\sin(\alpha x)$	 & $(1-y^2)^{\frac{A}{2\alpha}}$  \\
(trigonometric)			& ($-\frac{\pi}{2\alpha}\leq x\leq \frac{\pi}{2\alpha}$)				  &								&		   & \hspace{1mm}$\cdot P_{n}^{(\frac{A}{\alpha}-\frac{1}{2},\frac{A}{\alpha}-\frac{1}{2})}(y)$  \\[4pt] \hline
3d oscillator & $s^{2}r-\frac{l+1}{r}$   & $s^{4}r^{2}+\frac{l(l+1)}{r^2}-(2l+3)s^2$ 	& $s^{2}r^{2}$    		 & $y^{\frac{l+1}{2}}e^{-\frac{y}{2}}L_{n}^{l+\frac{1}{2}}(y)$   \\[4pt] \hline
generalized   			& \hspace{-4.4mm}$A\coth(\alpha r)-B$cosech$(\alpha r)$ & $A^2+(B^2+A^2+A\alpha)$cosech$^2(\alpha r)$ & $\cosh(\alpha r)$   &  $(y-1)^{\frac{B-A}{2\alpha}}(y+1)^{-\frac{A+B}{2\alpha}}$ \\
P\"oschl-Teller 			& $(A<B)$			 	  & $-B(2A+\alpha)\coth(\alpha r)$cosech$(\alpha r)$	     &		      & \hspace{1mm}$\cdot P_{n}^{(\frac{B-A}{\alpha}-\frac{1}{2},-\frac{A+B}{\alpha}-\frac{1}{2})}(y)$  \\[4pt] \hline
P\"oschl-Teller			& \hspace{-4.4mm}$A\tan(\alpha r)-B\cot(\alpha r)$		& $-(A+B)^2+A(A-\alpha)\sec^2(\alpha r)$		& $\cos(2\alpha r)$		& $(1-y)^{\frac{B}{2\alpha}}(1+y)^{\frac{A}{2\alpha}}$ \\		
& \hspace{-5mm}($A,B>0$, $0<r<\frac{\pi}{2\alpha}$)		& $+B(B-\alpha)\text{cosec}^2(\alpha r)$	&	& $\cdot P^{(\frac{B}{\alpha}-\frac{1}{2},\frac{A}{\alpha}-\frac{1}{2})}(y)$  \\[4pt] \bottomrule
\end{tabulary}
\end{adjustbox}
\begin{adjustbox}{width=\textwidth}
\begin{tabulary}{1.2\textwidth}{LCCCCC}  \toprule
Name of potential 		      & $a_1$ 	      & $a_2$       & $E^{(1)}_n(a_1)$          & $E^{(2)}_n(a_1)$      & $E_{m}^{(1)}(a_2)+R(a_1)$    \\ \midrule
shifted \\ oscillator 		      & $s$        & $s$    & $2ns^{2}$      	        & $2(n+1)s^{2}$    	   & $2(m+1)s^2$      \\[4pt] \hline 
Scarf II or   	      	              &		    	      &			  &				        &				   &   \\
Rosen-Morse II			      & $A$	    	      & $A-\alpha$ & $2nA\alpha-n^2\alpha^2$   & $2(n+1)A\alpha-(n+1)^2\alpha^2$  & $2(m+1)A\alpha-(m+1)^2\alpha^2$ \\
(hyperbolic)			      &			      &			  &		     			&	&  \\[4pt] \hline
Scarf I \\(trigonometric)	     & $A$	 	      & $A+\alpha$  & $2nA\alpha+n^2\alpha^2$ & $2(n+1)A\alpha+(n+1)^2\alpha^2$	& $2(m+1)A\alpha+(m+1)^2\alpha^2$  \\[4pt] \hline
3d oscillator & $l$ 	      	      & $l+1$	      & $4ns^{2}$  & $4(n+1)s^{2}$ 	        & $4(m+1)s^2$   \\[4pt] \hline
generalized \\ P\"oschl-Teller  & $A$  	      & $A-\alpha$ & $2nA\alpha-n^2\alpha^2$    &  $2(n+1)A\alpha-(n+1)^2\alpha^2$      & $2(m+1)A\alpha-(m+1)^2\alpha^2$      \\[4pt] \hline
P\"oschl-Teller 			     & 	$A, B$	  	      &	 $A+\alpha, B+\alpha$	   & $4n\alpha(A+B+n\alpha)$	& 
$4(n+1)\alpha(A+B+(n+1)\alpha)$       & $4(m+1)\alpha(A+B+(m+1)\alpha)$     \\[4pt] \bottomrule
\end{tabulary}
\end{adjustbox}
\end{table}


\par
Recall that the indices $N(n)$, $n=0,1,\ldots$, are chosen in such a way that 
$\psi^{(1)}_{N(n)}(x)$ are even and $\psi^{(1)}_{N(n)-1}(x)$ are odd. 
Since the Hermite polynomials $H_{n}(x)$ are symmetric, 
and the Jacobi polynomials $P^{(\alpha,\beta)}_{n}(x)$ are symmetric when $\alpha=\beta$, 
we observe in Table 1 that all the eigenfunctions of the type (A) examples (the first three examples) are symmetric, 
and those of the type (B) examples can be constructed as symmetric functions using the method (\ref{tilde_psi}) 
introduced in Remark \ref{rmk_R}. 
Then it turns out that $N(n)=2n$, $n=0,1,\ldots$, 
and (\ref{psi_a_1}) reads 
\begin{equation}
\label{psi_pm_ex}
\psi_{\pm n}(x;a_1)=\psi^{(1)}_{2n}(x;a_1)\pm \tilde{C}_{n}\psi^{(1)}_{2n-1}(x;a_2). 
\end{equation}
\end{remark}

\par
To summarize, we provide a list of these eigenfunctions in Table \ref{eigenfunc_operaotL}. 
For the explicit definitions and properties of these classical orthogonal polynomials 
(Hermite, Laguerre and Jacobi polynomials) one can refer to the Appendix. 
\begin{table}[htb]
\centering
\caption{Eigenfunctions of the operator $\mathcal{L}=\partial_x R+v(x)$ with $v(x)$ given by the superpotentials in table 1.}
\label{eigenfunc_operaotL}
\begin{tabular}{ccl} \hline
$v(x)$   			& $\tilde{C}_{n}$	&  $\psi_{\pm n}(x)$  \\[2pt] \hline
$s^2x$			& $2\sqrt{n}$	& $e^{-\frac{1}{2}s^2x^2}[H_{2n}(sx)\pm\tilde{C}_{n}H_{2n-1}(sx)]$ \\[2pt] \hline
$A\tanh(\alpha x)$	& \multirow{2}{*}{$\frac{1}{2}\sqrt{\frac{A-n\alpha}{n\alpha}}$}	& $(-1)^n\cosh(\alpha x)^{-\frac{A}{\alpha}}[P_{2n}^{(-\frac{A}{\alpha}-\frac{1}{2}),-\frac{A}{\alpha}-\frac{1}{2})}
				(i\sinh(\alpha x))$ \\ 
  				&  & \hspace{2.6cm}$\mp\tilde{C}_{n}i\cosh(\alpha x)P_{2n-1}^{(-\frac{A}{\alpha}+\frac{1}{2}),-\frac{A}{\alpha}+
				\frac{1}{2})}
				(i\sinh(\alpha x))]$  \\[2pt] \hline
$A\tan(\alpha x)$	& $\frac{1}{2}\sqrt{\frac{A+n\alpha}{n\alpha}}$	& $|\cos(\alpha x)|^{\frac{A}{\alpha}}[P_{2n}^{(\frac{A}{\alpha}-\frac{1}{2},\frac{A}{\alpha}-\frac{1}{2})}(\sin(\alpha x))
				\pm\tilde{C}_{n}|\cos(\alpha x)|P_{2n-1}^{(\frac{A}{\alpha}+\frac{1}{2},\frac{A}{\alpha}+\frac{1}{2})}(\sin(\alpha x)]$  
				\\[2pt] \hline
$s^2r-\frac{l+1}{r}$    & $-\frac{1}{\sqrt{n}}$	& $|sr|^{l+1}e^{-\frac{1}{2}s^2r^2}[L_n^{(l+\frac{1}{2})}(s^2r^2)\pm\tilde{C}_{n}srL_{n-1}^{(l+\frac{3}{2})}(s^2r^2)]$ \\[2pt] \hline
$A\coth(\alpha r)$	& \multirow{2}{*}{$\frac{1}{2}\sqrt{\frac{2A-n\alpha}{n\alpha}}$}	& $(\cosh(\alpha r)-1)^{\frac{B-A}{2\alpha}}(\cosh(\alpha r)+1)^{-\frac{B+A}
				{2\alpha}}[P_{n}^{(\frac{B-A}{\alpha}-\frac{1}{2},-\frac{B+A}{\alpha}-\frac{1}{2})}(\cosh(\alpha r)$  \\
$-B\text{cosech}(\alpha r)$	& 	& \hspace{3cm}$\pm
\tilde{C}_{n}\sinh(\alpha r)P_{n-1}^{(\frac{B-A}{\alpha}+\frac{1}{2},-\frac{B+A}
				{\alpha}+\frac{1}{2})}(\cosh(\alpha r)]$  \\[2pt] \hline
$A\tan(\alpha r)$	& \multirow{2}{*}{$\frac{1}{2}\sqrt{\frac{A+B+n\alpha}{n\alpha}}$}	&  $(1-\cos(2\alpha r))^{\frac{B}{2\alpha}}(1+\cos(2\alpha r))^{\frac{A}{2\alpha}}
[P_{n}^{(\frac{B}{\alpha}-\frac{1}{2},\frac{A}{\alpha}-\frac{1}{2})}(\cos(2\alpha r))$  \\
$-B\cot(\alpha r)$	&	& \hspace{3cm}$\pm\tilde{C}_{n}\sin(2\alpha r)
P_{n}^{(\frac{B}{\alpha}+\frac{1}{2},\frac{A}{\alpha}+\frac{1}{2})}(\cos(2\alpha r))]$  \\[2pt] \hline
\end{tabular}
\end{table}

\section{Dunkl-SUSY orthogonal functions in terms of classical orthogonal polynomials}
\par
In this section we shall give some examples of Dunkl-SUSY orthogonal functions explicitly. 
Before that we may apply a gauge transformation on the operator $\mathcal{L}$ as follow: 
\begin{equation}
\label{Y}
Y:=(\psi^{(1)}_0(x))^{-1}\mathcal{L}\psi^{(1)}_0(x)=\partial_x R+v(x)(I-R).
\end{equation}
The eigenfunctions of the new operator $Y$ are 
$Q_n(x)=(\psi^{(1)}_0(x))^{-1}\psi^{(1)}_n(x)$ ($n=0,1,\ldots$). 
We will show that these eigenfunctions also satisfy certain orthogonality relations, 
that will deem giving them the name of 
Dunkl-Supersymmetric (Dunkl-SUSY) orthogonal functions.

\par
The weight function $\omega(x)$ associated with the operator $Y$ 
satisfies to \cite{SUSYQM, XBI}
$$
\frac{\omega'(x)}{\omega(x)}=-2v(x)
=2\frac{\psi'_0(x)}{\psi_0(x)}
$$
and hence $\omega(x)=(\psi^{(1)}_0(x))^2$. 
Therefore the orthogonality relation of $\{Q_n(x)\}_{n=0, \pm 1, \pm 2, \ldots}$ are 
\begin{equation}
\int_{I}(\psi^{(1)}_0(x))^2Q_n(x)Q_m(x)=0, \quad n\neq m,
\end{equation}
where the interval $I$ will be determined from the weight function $(\psi^{(1)}_0(x))^2$.

\par
With an eye to presenting a model-independent description of Dunkl-SUSY orthogonal functions, 
we now extract from Table \ref{eigenfunc_operaotL} the following families of such orthogonal functions 
that are defined in terms of classical orthogonal polynomials. 
We assume that all the Hermite, Laguerre, Jacobi polynomials 
($\hat{H}_n(x)$, $\hat{L}_n^{(\alpha)}(x)$, $\hat{P}_n^{(\alpha,\beta)}(x)$) 
involved are orthonormal: 
$$
\int^{\infty}_{-\infty}\hat{H}_{m}(x)\hat{H}_{n}(x)=\delta_{m,n}, \quad 
\int^{\infty}_{0}\hat{L}^{(\alpha)}_{m}(x)\hat{L}^{(\alpha)}_{n}(x)=\delta_{m,n}, \quad 
\int^{1}_{-1}\hat{P}^{(\alpha,\beta)}_{m}(x)\hat{P}^{(\alpha,\beta)}_{n}(x)=\delta_{m,n}.
$$
Specifically, let $H_n(x)$, $L_n^{(\alpha)}(x)$, $P_n^{(\alpha,\beta)}(x)$ be defined as in the Appendix, then
$$
\hat{H}_{n}(x)=(2^n n!\sqrt{\pi})^{-\frac{1}{2}}H_{n}(x), \quad
\hat{L}^{(\alpha)}_{n}(x)=\left(\frac{\Gamma(n+\alpha+1)}{n!}\right)^{-\frac{1}{2}}L^{(\alpha)}_{n}(x), 
$$
$$
\hat{P}^{(\alpha,\beta)}_{n}(x)=\left(\frac{2^{\alpha+\beta+1}\Gamma(\alpha+n+1)\Gamma(\beta+n+1)}
{n!\Gamma(\alpha+\beta+n+1)(\alpha+\beta+2n+1)}\right)^{-\frac{1}{2}}P^{(\alpha,\beta)}_{n}(x). 
$$

\par
Now we are ready to provide the following examples of Dunkl-SUSY orthogonal functions. 
\begin{itemize}
\item Dunkl-SUSY orthogonal functions in terms of the orthonormal Hermite polynomials $\hat{H}_n(x)$, 
which is related with the {\bf shifted oscillator} potential: 
\begin{eqnarray}
\label{SUSY OPs_QH}
Q^{(H)}_{\pm n}(x) \hspace{-2.4mm}&=&\hspace{-2.4mm}  
\frac{1}{\sqrt{2}}\bigg(\hat{H}_{2n}(sx)\pm \hat{H}_{2n-1}(sx)\bigg), 
\quad n\geq 1, \quad Q^{(H)}_{0}(x)=1, \\
Y \hspace{-2.4mm}&=&\hspace{-2.4mm} \partial_x R+s^2x, \quad 
YQ^{(H)}_{\pm n}(x)=\pm \sqrt{E_{2n}}Q^{(H)}_{\pm n}(x), \quad E_{n}=2ns^2, \\
& & \int^{\infty}_{-\infty}e^{-s^2x^2}Q^{(H)}_{n}(x)Q^{(H)}_{m}(x)=\delta_{nm}, 
\quad m,n\in\mathbb{Z}.
\end{eqnarray}

\item Dunkl-SUSY orthogonal functions in terms of the orthonormal Laguerre polynomials $\hat{L}^{(\alpha)}_{n}(x)$, 
which is related with the {\bf 3d oscillator} potential ($l+\frac{1}{2}\rightarrow\alpha$): 
\begin{eqnarray}
\label{SUSY OPs_QL1}
Q^{(L)}_{\pm n}(x) \hspace{-2.4mm}&=&\hspace{-2.4mm}  
\frac{1}{\sqrt{2}}\bigg(\hat{L}^{(\alpha)}_{n}(s^2x^2)\mp x\hat{L}^{(\alpha+1)}_{n-1}(s^2x^2)\bigg), 
\quad n\geq 1, \quad Q^{(L)}_{0}(x)=1, \\
Y \hspace{-2.4mm}&=&\hspace{-2.4mm} \partial_x R+s^2x-\frac{\alpha+1/2}{x}, \quad 
YQ^{(L)}_{\pm n}(x)=\pm\sqrt{E_n}Q^{(L)}_{\pm n}(x), \quad E_{n}=4ns^2,  \\
& & \int^{\infty}_{-\infty}e^{-s^2x^2}|sx|^{2\alpha+1}Q^{(L)}_{n}(x)Q^{(L)}_{m}(x)=\delta_{nm}, 
\quad m,n\in\mathbb{Z}.
\end{eqnarray}
(In fact, the above example returns to the example of the Hermite case when $\alpha=-\frac{1}{2}$, 
this is due to the relations \ref{H2L_1} and \ref{H2L_2}. 
It is also obvious from the eigenvalue equation of $Y$.)

\item Dunkl-SUSY orthogonal functions in terms of the orthonormal Jacobi polynomials $\hat{P}^{(\alpha,\beta)}_{n}(x)$: 

the example related with the {\bf Scarf II or Rosen-Morse II (hyperbolic)} potential, 
\begin{eqnarray}
\label{SUSY OPs_QJ1}
& & \hspace{-12mm} Q^{(J,1)}_{\pm n}(x) 
\frac{(-1)^n}{\sqrt{2}}\bigg(\hat{P}^{(-\frac{A}{\alpha}-\frac{1}{2},-\frac{A}{\alpha}-\frac{1}{2})}_{2n}(i\sinh(\alpha x))\pm
\cosh(x)\hat{P}^{(-\frac{A}{\alpha}+\frac{1}{2},-\frac{A}{\alpha}+\frac{1}{2})}_{2n-1}(i\sinh(\alpha x))\bigg),  \hspace{1mm} n\geq 1, \\ 
& & \hspace{-6mm} Y=\partial_x R+A\tanh(\alpha x), \quad 
YQ^{(J,1)}_{\pm n}(x)=\pm\sqrt{E_{2n}}Q^{(J,.1)}_{\pm n}(x), \quad E_{n}=2nA\alpha-n^2\alpha^2,  \\
& & \int^{\frac{i\pi}{2\alpha}}_{-\frac{i\pi}{2\alpha}}|\cosh(\alpha x)|^{-\frac{A}{\alpha}}Q^{(J,1)}_{n}(x)Q^{(J,1)}_{m}(x)=\delta_{nm}, 
\quad m,n\in\mathbb{Z};
\end{eqnarray}

the example related with the {\bf Scarf I (trigonometric)} potential,
\begin{eqnarray}
\label{SUSY OPs_QJ2}
& & \hspace{-12mm} Q^{(J,2)}_{\pm n}(x) =
\frac{1}{\sqrt{2}}\bigg(\hat{P}^{(\frac{A}{\alpha}-\frac{1}{2},\frac{A}{\alpha}-\frac{1}{2})}_{2n}(\sin(\alpha x)) \pm 
\cos(\alpha x)\hat{P}^{(\frac{A}{\alpha}+\frac{1}{2},\frac{A}{\alpha}+\frac{1}{2})}_{2n-1}(\sin(\alpha x))\bigg),  
\quad n\geq 1, \\
& & \hspace{-6mm}  Y=\partial_x R+A\tan(\alpha x), \quad 
YQ^{(J,2)}_{\pm n}(x)=\pm\sqrt{E_{2n}}Q^{(J,2)}_{\pm n}(x), \quad E_{n}=2nA\alpha+n^2\alpha^2,  \\
& & \int^{\frac{\pi}{2\alpha}}_{-\frac{\pi}{2\alpha}}|\cos(\alpha x)|^{-\frac{A}{\alpha}}Q^{(J,2)}_{n}(x)Q^{(J,2)}_{m}(x)=\delta_{nm}, 
\quad m,n\in\mathbb{Z};
\end{eqnarray}

the example related with the {\bf generalized P\"oschl-Teller} potential ($A<B$),
\begin{equation}
\label{SUSY OPs_QJ3}
Q^{(J,3)}_{\pm n}(x) =
\frac{1}{\sqrt{2}}\bigg(\hat{P}^{(\frac{B-A}{\alpha}-\frac{1}{2},-\frac{B+A}{\alpha}-\frac{1}{2})}_{n}(\cosh(\alpha x)) \mp 
i\sinh(\alpha x)\hat{P}^{(\frac{B-A}{\alpha}+\frac{1}{2},-\frac{B+A}{\alpha}+\frac{1}{2})}_{n-1}(\cosh(\alpha x))\bigg),  
\hspace{1mm} n\geq 1, 
\end{equation}
$$
Q^{(J,3)}_{\pm 0}(x) = 1,
$$
$$
Y=\partial_x R+A\coth(\alpha x)-B\text{cosech}(\alpha x), 
$$
\begin{equation}
YQ^{(J,3)}_{\pm n}(x)=\pm\sqrt{E_n}Q^{(J,3)}_{\pm n}(x), \quad E_{n}=2nA\alpha-n^2\alpha^2,  
\end{equation}
\begin{equation}
\int^{\frac{i\pi}{\alpha}}_{-\frac{i\pi}{\alpha}}(\cosh(\alpha r)-1)^{\frac{B-A}{2\alpha}}(\cosh(\alpha r)+1)^{-\frac{B+A}{2\alpha}}
Q^{(J,3)}_{n}(x)Q^{(J,3)}_{m}(x)=\delta_{nm}, 
\quad m,n\in\mathbb{Z};
\end{equation}

the example related with the {\bf P\"oschl-Teller} potential ($A,B>0$),
\begin{equation}
\label{SUSY OPs_QJ4}
Q^{(J,4)}_{\pm n}(x) =
\frac{1}{\sqrt{2}}\bigg(\hat{P}^{(\frac{B}{\alpha}-\frac{1}{2},\frac{A}{\alpha}-\frac{1}{2})}_{n}(\cos(2\alpha x)) \pm 
\sin(2\alpha x)\hat{P}^{(\frac{B}{\alpha}+\frac{1}{2},\frac{A}{\alpha}+\frac{1}{2})}_{n-1}(\cos(2\alpha x))\bigg),  
\quad n\geq 1, 
\end{equation}
$$
Q^{(J,4)}_{\pm 0}(x) = 1,
$$
$$
Y=\partial_x R+A\tan(\alpha x)-B\cot(\alpha x), 
$$
\begin{equation}
YQ^{(J,4)}_{\pm n}(x)=\pm\sqrt{E_n}Q^{(J,4)}_{\pm n}(x), \quad E_{n}=4n\alpha(A+B+n\alpha),  
\end{equation}
\begin{equation}
\int^{\frac{\pi}{2\alpha}}_{-\frac{\pi}{2\alpha}}(1-\cos(2\alpha x))^{\frac{B}{\alpha}}(1+\cos(2\alpha x))^{\frac{A}{\alpha}}
Q^{(J,4)}_{n}(x)Q^{(J,2)}_{m}(x)=\delta_{nm}, 
\quad m,n\in\mathbb{Z}.
\end{equation}

\end{itemize}

\section{The recurrence relation of the Dunkl-supersymmetric orthogonal polynomials}
\par 
Notice that in the previous section 
the examples of Dunkl-SUSY orthogonal functions in terms of 
the Hermite polynomials and the Laguerre polynomials 
are also polynomials, we call them Dunkl-supersymmetric orthogonal polynomials (Dunkl-SUSY OPs). 
Using these examples
we shall identify the main properties of these polynomials so as to offer in this section a characterization which is more intrinsic.
The most fundamental features of the Dunkl-SUSY OPs can be identified as:
\begin{enumerate}
\item[(A)] For all positive and negative integers $n$, the polynomial system $\{Q_n(x)\}_{n=0,\pm 1,\pm 2,\ldots}$ 
satisfy an orthogonality relation, 
$$
\int_{I}Q_{n}(x)Q_{m}(x)\omega(x)dx=h_n\delta_{n,m}, \quad 
(n,m=\ldots, -1,0,1,\ldots);
$$
\item[(B)] 
The polynomial $Q_{-n}(x)$ with negative index has the same degree as the polynomial $Q_{n}(x)$ with positive index,
$$
Q_{-n}(x):=R[Q_{n}(x)]=Q_{n}(-x). \quad (n=1, 2, \ldots);
$$
\item[(C)]  $\{Q_n(x)\}$ are the polynomial parts of the eigenfunctions of a Dunkl-type differential operator of the form 
$$\mathcal{L}=\partial_x R+v(x), \quad (v(-x)=-v(x)).$$
\end{enumerate}

\par
Let us now address the question of what can be said about the polynomial system $\{Q_n(x)\}_{n=0,\pm 1,\pm 2,\ldots}$ satisfying the conditions (A) and (B) if it is not assumed that they satisfy an eigenvalue equation.
The answer to this question is given by the following theorem. 
Without loss of generality,  from now on we take $Q_n(x)$ monic, 
i.e., with the coefficient of the highest degree term in $x$ equal to 1.

\begin{theorem}\label{thm5.1}
\par
A necessary and sufficient condition for the existence of a polynomial system 
$\{Q_n(x)\}_{n=0,\pm 1,\pm 2,\ldots}$ 
which satisfies the conditions (A) and (B) is that $Q_n(x)$ are expressed as 
\begin{equation}
\label{NSOP}
\begin{split}
Q_{n}(x) &= S_{2n}(x)+a_{n}S_{2n-1}(x), \\
Q_{-n}(x) &= S_{2n}(x)-a_{n}S_{2n-1}(x), 
\end{split}
\quad n=1,2,\ldots
\end{equation}
with $Q_{0}(x)=1$, where the coefficients $a_n$ depend on the polynomials $S_n(x)$ (see (\ref{Coe_a})), 
and with $\{S_n(x)\}_{n=0,1,2\ldots}$ a monic symmetric orthogonal polynomial system:
$$
S_n(-x) = (-1)^n S_n(x), \quad
\int_{I}S_{n}(x)S_{m}(x)\omega(x)dx=k_n\delta_{n,m}.
$$
\end{theorem}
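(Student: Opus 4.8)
The plan is to prove the two implications separately, using the even/odd splitting of each $Q_n$ as the organizing device. Throughout I abbreviate $\langle f,g\rangle=\int_I f(x)g(x)\omega(x)\,dx$, and I write $e_n(x)=\tfrac12(Q_n(x)+Q_{-n}(x))$ and $o_n(x)=\tfrac12(Q_n(x)-Q_{-n}(x))$. By condition (B) these are genuinely the even and odd components of $Q_n$, so that $Q_{\pm n}=e_n\pm o_n$; this is the representation I ultimately want to match against (\ref{NSOP}).

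Sufficiency is the short direction. Given a monic symmetric orthogonal polynomial system $\{S_k\}$ and the polynomials defined by (\ref{NSOP}), property (B) is immediate, since $S_k(-x)=(-1)^kS_k(x)$ gives $Q_n(-x)=S_{2n}(x)-a_nS_{2n-1}(x)=Q_{-n}(x)$ and $\deg Q_{\pm n}=2n$. For (A) I would split the verification by index: if $|m|\neq|n|$ the four polynomials $S_{2m},S_{2m-1},S_{2n},S_{2n-1}$ are pairwise distinct, hence mutually orthogonal, so $\langle Q_m,Q_n\rangle=0$; the only remaining pair is $Q_n,Q_{-n}$, for which $\langle Q_n,Q_{-n}\rangle=k_{2n}-a_n^2k_{2n-1}$, and this vanishes precisely when $a_n=\sqrt{k_{2n}/k_{2n-1}}$, which is the content of (\ref{Coe_a}).

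For necessity I would proceed in four steps. (i) Since $Q_n$ is monic and $Q_{-n}(x)=Q_n(-x)$ is also monic, the leading coefficient $(-1)^{\deg Q_n}$ of $Q_n(-x)$ must equal $1$, so every $Q_n$ has even degree; moreover $o_n\neq0$, for $o_n=0$ would force $Q_n=Q_{-n}$ and contradict $\langle Q_n,Q_n\rangle>0$. (ii) The crucial step is to show that $\omega$ is even. Writing $\omega=\omega_e+\omega_o$ for its even and odd parts and using $\langle Q_0,Q_n\rangle=\langle Q_0,Q_{-n}\rangle=0$ with $Q_0=1$, subtraction cancels the even contributions and yields $\int_I o_n\,\omega_o\,dx=0$ for every $n\geq1$. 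Because the system is complete, the odd parts $\{o_n\}$ span all odd polynomials, so $\omega_o$ annihilates every odd polynomial and $\omega$ is even. (iii) With $\omega$ even, all mixed-parity integrals $\langle e_m,o_n\rangle$ vanish automatically, so for $m\neq n$ the relations $\langle Q_m,Q_n\rangle=\langle e_m,e_n\rangle+\langle o_m,o_n\rangle=0$ and $\langle Q_m,Q_{-n}\rangle=\langle e_m,e_n\rangle-\langle o_m,o_n\rangle=0$ separate into $\langle e_m,e_n\rangle=0$ and $\langle o_m,o_n\rangle=0$, making $\{e_n\}_{n\geq0}\cup\{o_n\}_{n\geq1}$ a single mutually orthogonal set. (iv) Completeness then pins the degrees: the orthogonal even polynomials $\{e_n\}$ exhaust the even degrees, so after relabelling $\deg e_n=2n$, while the orthogonal odd polynomials $\{o_n\}$ exhaust the odd degrees with $\deg o_n<2n$, forcing $\deg o_n=2n-1$. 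Setting $S_{2n}:=e_n$ and $S_{2n-1}:=o_n/a_n$ with $a_n$ the leading coefficient of $o_n$ produces a monic symmetric orthogonal polynomial system, and $Q_{\pm n}=S_{2n}\pm a_nS_{2n-1}$ is exactly (\ref{NSOP}); finally $\langle Q_n,Q_{-n}\rangle=0$ gives $k_{2n}=a_n^2k_{2n-1}$, recovering (\ref{Coe_a}).

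The main obstacle is step (ii), the evenness of $\omega$: nothing in (A) asserts that the norms $h_n$ and $h_{-n}$ coincide, and a priori an odd part $\omega_o$ could survive and spoil the orthogonality of the candidate family $\{S_k\}$ with respect to $\omega$. The point to get right is that applying (B) to the pairing against $Q_0$ isolates $\int_I o_n\,\omega_o\,dx=0$, and that completeness upgrades this single family of relations to $\omega_o\equiv0$; only then do the even and odd sectors decouple. The degree bookkeeping in step (iv) is the one other place needing care, but it is routine once orthogonality and completeness are in hand.
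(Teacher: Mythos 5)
Your sufficiency argument is correct and is essentially the paper's. The necessity direction, however, hinges on an assumption you are not entitled to: completeness. Conditions (A) and (B) constrain the degrees of the $Q_n$ only through $\deg Q_{-n}=\deg Q_n$; nothing forces $\deg Q_n=2n$, and the family may skip degrees (e.g.\ $\deg Q_{\pm1}=4$, $\deg Q_{\pm2}=8$, and so on, each new pair being made orthogonal to its finitely many predecessors). The paper is explicit on this point: its proof ends by remarking that ``the subscripts in $S_{2n}(x)$ and $S_{2n-1}(x)$ do not necessarily represent the corresponding degrees.'' Hence your step (ii) --- the odd parts $\{o_n\}$ span all odd polynomials, so $\omega_o\equiv0$ --- and your step (iv) --- the $e_n$ exhaust the even degrees and $\deg o_n=2n-1$ --- are unsupported, and the chain ``$\omega$ even $\Rightarrow$ parity splitting $\Rightarrow$ identification of the $S_k$'' breaks at its first link. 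There is also an internal circularity: spanning requires the $o_n$ to be linearly independent, which you would get from the orthogonality of step (iii), which you derive \emph{after} (and from) step (ii). The paper's necessity proof avoids all of this: it never uses $Q_0$ and never touches the weight, proceeding by pure linear algebra --- adding and subtracting the four relations (\ref{OrRe1})--(\ref{OrRe4}) for $n\neq m$ gives $\langle e_n,e_m\rangle=\langle e_n,o_m\rangle=\langle o_n,e_m\rangle=\langle o_n,o_m\rangle=0$, while $\langle Q_n,Q_{-n}\rangle=0$ gives $\langle e_n,e_n\rangle=\langle o_n,o_n\rangle$, whence $S_{2n}:=e_n$, $S_{2n-1}:=o_n/a_n$ with $a_n=\sqrt{k_{2n}/k_{2n-1}}$.

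That said, the worry you flag --- that an odd part of $\omega$ could spoil the orthogonality of the candidate family $\{S_k\}$ --- is genuine, and the paper does not dispose of it either. For $\{S_k\}$ to be orthogonal in the sense of the theorem one needs the same-index relation $\langle S_{2n},S_{2n-1}\rangle\propto\langle e_n,o_n\rangle=0$, and the paper's algebra yields the mixed products only for $n\neq m$; the diagonal relations give merely $h_{\pm n}=\langle e_n,e_n\rangle+\langle o_n,o_n\rangle\pm2\langle e_n,o_n\rangle$, and (A) permits $h_n\neq h_{-n}$. In fact $\langle e_n,o_n\rangle=0$ is \emph{not} a consequence of (A) and (B) alone: on $(-1,1)$ with $\omega(x)=1+\epsilon x$, taking $Q_0=1$, $e_1=x^4-\tfrac15$, $o_1=\gamma\bigl(x^3-\tfrac35x\bigr)$ with $\gamma$ scaled so that $\langle e_1,e_1\rangle=\langle o_1,o_1\rangle$, the polynomials $Q_0$, $Q_{\pm1}=e_1\pm o_1$ satisfy every relation in (A) and (B) while $\langle e_1,o_1\rangle\neq0$, and the family can be continued indefinitely if the degrees are allowed to grow fast enough. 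So a supplementary hypothesis (evenness of $\omega$, or $\deg Q_n=2n$) is genuinely required for the necessity direction. Under $\deg Q_n=2n$ your strategy does work and would in fact repair the paper's proof, provided you reorder it: first obtain $\langle o_n,o_m\rangle=0$ ($n\neq m$) and $o_n\neq0$ algebraically as the paper does, conclude that the $o_n$ are independent and hence, by counting degrees, span the odd polynomials, then use $\langle Q_0,Q_{\pm n}\rangle=0$ to kill the odd moments of $\omega$, and only then conclude $\langle e_n,o_n\rangle=0$. As written, though, your proof assumes the completeness it would need to establish.
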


\begin{proof}
The sufficiency is straightforward. If (\ref{NSOP}) holds, then it immediately follows that 
$Q_{-n}(x)=Q_{n}(-x)$, thus (B) is satisfied. 
For all nonnegative integers $n,m$, we have
\begin{equation*}
\begin{split}
\int_{I}Q_{n}(x)Q_{m}(x)\omega(x)dx &=
\int_{I}(S_{2n}(x)+a_{n}S_{2n-1}(x))(S_{2m}(x)+a_{m}S_{2m-1}(x))\omega(x)dx \\
&\hspace{-2cm}= 
(k_{2n}+a_{n}a_{m}k_{2n-1})\delta_{n,m}+a_{m}k_{2n}\delta_{2n,2m-1}+a_{n}k_{2n-1}\delta_{2n-1,2m} \\
&\hspace{-2cm}= 
(k_{2n}+a_{n}a_{m}k_{2n-1})\delta_{n,m},
\end{split}
\end{equation*}
\begin{equation*}
\begin{split}
\int_{I}Q_{n}(x)Q_{-m}(x)\omega(x)dx &=
\int_{I}(S_{2n}(x)+a_{n}S_{2n-1}(x))(S_{2m}(x)-a_{m}S_{2m-1}(x))\omega(x)dx \\
&\hspace{-2cm}= 
(k_{2n}-a_{n}a_{m}k_{2n-1})\delta_{n,m}-a_{m}k_{2n}\delta_{2n,2m-1}+a_{n}k_{2n-1}\delta_{2n-1,2m} \\
&\hspace{-2cm}= 
(k_{2n}-a_{n}a_{m}k_{2n-1})\delta_{n,m},
\end{split}
\end{equation*}
\begin{equation*}
\begin{split}
\int_{I}Q_{-n}(x)Q_{-m}(x)\omega(x)dx &=
\int_{I}(S_{2n}(x)-a_{n}S_{2n-1}(x))(S_{2m}(x)-a_{m}S_{2m-1}(x))\omega(x)dx \\
&\hspace{-2cm}= 
(k_{2n}+a_{n}a_{m}k_{2n-1})\delta_{n,m}-a_{m}k_{2n}\delta_{2n,2m-1}-a_{n}k_{2n-1}\delta_{2n-1,2m} \\
&\hspace{-2cm}= 
(k_{2n}+a_{n}a_{m}k_{2n-1})\delta_{n,m}.
\end{split}
\end{equation*}
Therefore, condition (A) is also satisfied. 
Besides, from the first and the third equation we also have 
\begin{equation}
\label{Const_h}
h_{n}=h_{-n}=k_{2n}+a_{n}^{2}k_{2n-1}.
\end{equation}

\par
As for the necessity, suppose that $\{Q_n(x)\}_{n=0,\pm 1,\pm 2,\ldots}$ satisfies the conditions (A) and (B). 
For $n=1,2,\ldots$, $Q_n(x)$ can be expressed as $Q_n(x)=e_n(x)+o_n(x)$, 
where $e_n(x)$ and $o_n(x)$ are even and odd polynomials, respectively. 
Then from condition (B) we have $Q_{-n}(x)=e_n(x)-o_n(x)$, while condition (A) implies that 
for any positive integers $n\neq m$, one has the relations 
\begin{equation}
\label{OrRe1}
0=\langle Q_n,Q_m \rangle
=\langle e_n,e_m \rangle+\langle e_n,o_m \rangle+\langle o_n,e_m \rangle+\langle o_n,o_m \rangle,
\end{equation}
\begin{equation}
\label{OrRe2}
0=\langle Q_n,Q_{-m} \rangle
=\langle e_n,e_m \rangle-\langle e_n,o_m \rangle+\langle o_n,e_m \rangle-\langle o_n,o_m \rangle,
\end{equation}
\begin{equation}
\label{OrRe3}
0=\langle Q_{-n},Q_m \rangle
=\langle e_n,e_m \rangle+\langle e_n,o_m \rangle-\langle o_n,e_m \rangle-\langle o_n,o_m \rangle,
\end{equation}
\begin{equation}
\label{OrRe4}
0=\langle Q_{-n},Q_{-m} \rangle
=\langle e_n,e_m \rangle-\langle e_n,o_m \rangle-\langle o_n,e_m \rangle+\langle o_n,o_m \rangle,
\end{equation}
which together lead to 
\begin{equation}
\label{OrRe_eo1}
\langle e_n,e_m \rangle=\langle e_n,o_m \rangle=\langle o_n,e_m \rangle=\langle o_n,o_m \rangle=0,
\end{equation}
where the inner product $\langle f,g \rangle=\int_{I}f(x)g(x)\omega(x)dx$. 
Note that (\ref{OrRe2}) and (\ref{OrRe3}) also hold for $n=m$, which implies that 
\begin{equation}
\label{OrRe_eo2}
\langle e_n,e_n \rangle=\langle o_n,o_n \rangle. \quad 
\end{equation}
The relations (\ref{OrRe_eo1}) and (\ref{OrRe_eo2}) mean that the polynomials $\{e_n(x),o_n(x)\}_{n=0,1,2,\ldots}$ 
form an orthogonal polynomial system, more exactly, in view of the parities of $e_n(x)$ and $o_n(x)$,  
they form a symmetric orthogonal polynomial system: 
$$
e_n(x)=S_{2n}(x), \quad o_n(x)=a_{n}S_{2n-1}(x),
$$
where the coefficients $a_n$ can be obtained from (\ref{OrRe_eo2}) and are: 
\begin{equation}
\label{Coe_a}
a_{n}=\sqrt{\frac{\langle S_{2n}(x),S_{2n}(x) \rangle}{\langle S_{2n-1}(x),S_{2n-1}(x) \rangle}}
=\sqrt{\frac{k_{2n}}{k_{2n-1}}}, \quad
(n=1,2,\ldots).
\end{equation}
Note that the subscripts in $S_{2n}(x)$ and $S_{2n-1}(x)$ do not necessarily represent the corresponding degrees. 
We have hence shown that
the conditions (A) and (B) lead to expression (\ref{NSOP}), 
thus proving necessity.
\end{proof}
Theorem \ref{thm5.1} provides a general presentation
of the Dunkl-SUSY OPs.
Conversely, if we are given a set of OPs satisfying the conditions (A) and (B), we can always recover the corresponding 
set of symmetric OPs $S_{n}(x)$. 

\par
Moreover, according to the relations (\ref{Const_h}) and (\ref{Coe_a}), 
the orthogonality constant of the polynomials defined by (\ref{NSOP}) turn out to be 
\begin{equation}
\label{hh}
h_{n}=h_{-n}=2k_{2n}, \quad (n=1,2,\ldots)
\end{equation}
and $h_{0}=k_{0}$, where $k_{n}$ are the orthogonality constant of $\{S_{n}(x)\}_{n=0,1,2,\ldots}$. 

\par
The recurrence relations can be given as follow.

\begin{theorem}
Let the monic symmetric OPs $\{S_{n}(x)\}_{n=0,1,2,\ldots}$ 
defined by the three-term recurrence relation: 
$$
S_{n}(x)=xS_{n-1}(x)-\gamma_{n}S_{n-2}(x), \quad (n=1,2,\ldots )
$$
with $S_{-1}(x)=0$, $S_{0}(x)=1$, 
then the monic polynomial system $\{Q_n(x)\}_{n=0,\pm 1,\pm 2,\ldots}$ defined by {\rm (\ref{NSOP})}
satisfies the recurrence relations: 
\begin{equation}
\label{RecuRe1}
\begin{split}
Q_{n+1}(x)=
\frac{1}{2}\bigg[x^2+(a_{n+1}-\frac{\gamma_{2n+1}}{a_{n}})x-\gamma_{2n+2}
-\frac{\gamma_{2n+1}a_{n+1}}{a_{n}}\bigg]Q_{n}(x) \\
+\frac{1}{2}\bigg[x^2+(a_{n+1}+\frac{\gamma_{2n+1}}{a_{n}})x-\gamma_{2n+2}
+\frac{\gamma_{2n+1}a_{n+1}}{a_{n}}\bigg]Q_{-n}(x),
\end{split}
\quad n=1,2,\ldots
\end{equation}
\begin{equation}
\label{RecuRe2}
\begin{split}
Q_{-(n+1)}(x)=
\frac{1}{2}\bigg[x^2-(a_{n+1}+\frac{\gamma_{2n+1}}{a_{n}})x-\gamma_{2n+2}
+\frac{\gamma_{2n+1}a_{n+1}}{a_{n}}\bigg]Q_{n}(x)  \\
+\frac{1}{2}\bigg[x^2-(a_{n+1}-\frac{\gamma_{2n+1}}{a_{n}})x-\gamma_{2n+2}
-\frac{\gamma_{2n+1}a_{n+1}}{a_{n}}\bigg]Q_{-n}(x).
\end{split}
\quad n=1,2,\ldots
\end{equation}
with $Q_{0}(x)=1$.
\end{theorem}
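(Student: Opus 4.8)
The plan is to prove both recurrences by a direct computation: I will climb two steps of the three-term recurrence for $\{S_n(x)\}$ to write $S_{2n+1}$ and $S_{2n+2}$ in terms of $S_{2n}$ and $S_{2n-1}$, and then pass back to the $Q$-basis. First I would invert the defining relations (\ref{NSOP}). Adding and subtracting $Q_n(x)$ and $Q_{-n}(x)$ gives
\[
S_{2n}(x)=\tfrac12\bigl(Q_n(x)+Q_{-n}(x)\bigr),\qquad
S_{2n-1}(x)=\tfrac{1}{2a_n}\bigl(Q_n(x)-Q_{-n}(x)\bigr),
\]
so that any $x$-polynomial combination of $S_{2n}$ and $S_{2n-1}$ can be rewritten as an $x$-polynomial combination of $Q_n$ and $Q_{-n}$.

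Next I would apply the recurrence $S_m(x)=xS_{m-1}(x)-\gamma_m S_{m-2}(x)$ twice. Substituting $S_{2n+1}(x)=xS_{2n}(x)-\gamma_{2n+1}S_{2n-1}(x)$ into $S_{2n+2}(x)=xS_{2n+1}(x)-\gamma_{2n+2}S_{2n}(x)$ yields
\[
S_{2n+2}(x)=(x^2-\gamma_{2n+2})S_{2n}(x)-\gamma_{2n+1}x\,S_{2n-1}(x).
\]
Feeding these into $Q_{n+1}=S_{2n+2}+a_{n+1}S_{2n+1}$ and $Q_{-(n+1)}=S_{2n+2}-a_{n+1}S_{2n+1}$ expresses each left-hand side as an explicit combination of $S_{2n}$ and $S_{2n-1}$; for instance one finds $Q_{n+1}(x)=(x^2+a_{n+1}x-\gamma_{2n+2})S_{2n}(x)-\gamma_{2n+1}(x+a_{n+1})S_{2n-1}(x)$, and the analogous expression for $Q_{-(n+1)}$ with $a_{n+1}$ negated.

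The last step is to substitute the inverted relations of the first step and collect the coefficients of $Q_n(x)$ and $Q_{-n}(x)$. The factor $1/a_n$ carried by $S_{2n-1}$ is exactly what generates the $\gamma_{2n+1}/a_n$ terms in (\ref{RecuRe1}) and (\ref{RecuRe2}); reading off the two coefficients in each of the two identities then reproduces the four bracketed polynomials verbatim. I do not anticipate a genuine obstacle here, since the statement is an algebraic identity obtained by unwinding the definitions. The only real care required is bookkeeping of the subscripts $2n-1,\,2n,\,2n+1,\,2n+2$ and of the two distinct coefficients $a_n$ and $a_{n+1}$; a stray sign or an off-by-one index at that stage is the single thing that could spoil the match with the stated brackets.
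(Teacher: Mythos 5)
Your proposal is correct and follows essentially the same route as the paper's proof: invert (\ref{NSOP}) to express $S_{2n}$ and $S_{2n-1}$ in terms of $Q_{\pm n}$, apply the three-term recurrence twice to reach the identity $Q_{n+1}(x)=(x^2+a_{n+1}x-\gamma_{2n+2})S_{2n}(x)-\gamma_{2n+1}(x+a_{n+1})S_{2n-1}(x)$ (which is exactly the paper's intermediate step), and then collect coefficients of $Q_n$ and $Q_{-n}$. The bookkeeping you flag as the only risk indeed works out, and the case of $Q_{-(n+1)}$ follows by the same computation with $a_{n+1}$ negated, just as the paper notes.
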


\begin{proof}
First, it follows from  (\ref{NSOP}) that 
\begin{equation}
\label{SS}
S_{2n}(x)=\frac{Q_{n}(x)+Q_{-n}(x)}{2}, \quad 
S_{2n-1}(x)=\frac{Q_{n}(x)-Q_{-n}(x)}{2a_{n}}.
\end{equation}
By definition, we have 
$$
Q_{n+1}(x)=S_{2n+2}(x)+a_{n+1}S_{2n+1}(x)
=(x+a_{n+1})S_{2n+1}(x)-\gamma_{2n+2}S_{2n}(x)
$$
$$\hspace{-6.5mm}
=(x+a_{n+1})(xS_{2n}(x)-\gamma_{2n+1}S_{2n-1})-\gamma_{2n+2}S_{2n}(x)
$$
$$\hspace{0mm}
=(x^2+a_{n+1}x-\gamma_{2n+2})S_{2n}(x)-\gamma_{2n+1}(x+a_{n+1})S_{2n-1},
$$
where the three-term recurrence relation of $\{S_{n}(x)\}$ has been used twice. 
Substituting (\ref{SS}) into the above then leads to 
$$
2Q_{n+1}(x)=
(x^2+a_{n+1}x-\gamma_{2n+2})(Q_{n}(x)+Q_{-n}(x))
-\frac{\gamma_{2n+1}}{a_{n}}(x+a_{n+1})(Q_{n}(x)-Q_{-n}(x))
$$
$$\hspace{-9.6mm}
=\bigg[x^2+(a_{n+1}-\frac{\gamma_{2n+1}}{a_{n}})x-\gamma_{2n+2}
-\frac{\gamma_{2n+1}a_{n+1}}{a_{n}}\bigg]Q_{n}(x)
$$
$$\hspace{-5.6mm}
+\bigg[x^2+(a_{n+1}+\frac{\gamma_{2n+1}}{a_{n}})x-\gamma_{2n+2}
+\frac{\gamma_{2n+1}a_{n+1}}{a_{n}}\bigg]Q_{-n}(x),
$$
from which we obtain (\ref{RecuRe1}). 
The relation (\ref{RecuRe2}) is obtained in the same manner. 
\end{proof}

\par
Note that the polynomials in the set
$\{Q_n(x)\}_{n=0,\pm 1,\pm 2,\ldots}$ 
can be ordered as follows
$$
Q_{0}(x), \hspace{2mm} Q_{1}(x), \hspace{2mm} Q_{-1}(x), \hspace{2mm} 
\cdots, \hspace{2mm} Q_{n}(x), \hspace{2mm} Q_{-n}(x), \cdots
$$ 
since $Q_{n}(x)$ and $Q_{-n}(x)$ have the same degree. 
This means that the relations (\ref{RecuRe1}) and (\ref{RecuRe2}) can be viewed as the three-term recurrence relations 
of $\{Q_n(x)\}_{n=0,\pm 1,\pm 2,\ldots}$. 

%
%


\section{Conclusion}
\par
Let us sum up. We introduced and characterized orthogonal functions that we have called Dunkl-supersymmetric. 
These functions are eigenfunctions of a class of Dunkl-type differential operators 
that can be cast within Supersymmetric Quantum Mechanics. 

\par
This investigation has built and expanded upon the analysis in \cite{SUSYQMR} where two examples had been studied. 
A significant feature of these orthogonal function families is that they do not involve polynomials of all degrees 
but are rather organized in pairs of polynomials both of the same degree 
(where the examples in terms of the Jacobi polynomials may be viewed as polynomials in another variable) . 
The connection with SUSY Quantum Mechanics has been exploited to obtain a number of Dunkl-SUSY orthogonal functions 
from exactly solvable problems. 
Informed by these results we could offer a general characterization of the Dunkl-SUSY OPs and 
could exhibit as well their recurrence relations. 
It would be of interest to relate the families of OPs that have been obtained as 
$q = -1$ limits of $q$-orthogonal polynomials \cite{XBI, DopBI, B-1J, L-1J}. 
A challenging project for the future would be to undertake the study of multivariate supersymmetric polynomials.

\section{Appendix}
\par

\noindent
{\bf Hermite polynomials} 
\begin{itemize}
\item 
The Hermite polynomials are orthogonal on the interval ($-\infty,\infty$) 
with respect to the weight function $e^{-x^2}$. 
They satisfy the orthogonality relations:
$$
\int^{\infty}_{-\infty}H_{m}(x)H_{n}(x)e^{-x^2}dx=2^n n!\sqrt{\pi}\delta_{mn}
$$
where $H_{n}(x)$ is the Hermite polynomial of degree $n$, 
$\delta_{mn}$ is Kronecher's delta. 
\item Three term recurrence relations: 
$$
H_{n}(x)=2xH_{n-1}(x)-2(n-1)H_{n-2}(x), \quad n\geq 1,
\quad H_{0}(x)=1.
$$
\end{itemize}

\noindent
{\bf Laguerre polynomials} 
\begin{itemize}
\item 
The Laguerre polynomials are orthogonal over $[0, \infty)$ 
with respect to the weight function $x^{\alpha}e^{-x}$. 
They satisfy the orthogonality relations:
$$
\int^{\infty}_{0}L^{(\alpha)}_{m}(x)L^{(\alpha)}_{n}(x)x^{\alpha}e^{-x}dx=\frac{\Gamma(n+\alpha+1)}{n!}\delta_{mn}
$$
where $L^{(\alpha)}_{n}(x)$ is the Laguerre polynomial of degree $n$. 
\item Three term recurrence relations: 
$$
L^{(\alpha)}_{n}(x)=\frac{2n-1+\alpha-x}{n}L^{(\alpha)}_{n-1}(x)-\frac{n-1+\alpha}{n}L^{(\alpha)}_{n-2}(x), \quad n\geq 1,
\quad L^{(\alpha)}_{0}(x)=1.
$$
\item Relation to Hermite polynomials: 
\begin{eqnarray}
\label{H2L_1}
H_{2n}(x) \hspace{-2.4mm}&=&\hspace{-2.4mm} (-1)^{n}2^{2n}n!L^{(-1/2)}_{n}(x^2),  \\
\label{H2L_2}
H_{2n+1}(x) \hspace{-2.4mm}&=&\hspace{-2.4mm} (-1)^{n}2^{2n+1}n!xL^{(1/2)}_{n}(x^2). 
\end{eqnarray}
\end{itemize}

\noindent
{\bf Jacobi polynomials} 
\begin{itemize}
\item 
The Jacobi polynomials are orthogonal on the interval ($-1,1$) 
with respect to the weight function $(1-x)^{\alpha}(1+x)^{\beta}$. 
They satisfy the orthogonality relations:
$$
\int^{\infty}_{-\infty}P^{(\alpha,\beta)}_{m}(x)P^{(\alpha,\beta)}_{n}(x)(1-x)^{\alpha}(1+x)^{\beta}dx=
\frac{2^{\alpha+\beta+1}\Gamma(\alpha+n+1)\Gamma(\beta+n+1)}
{n!\Gamma(\alpha+\beta+n+1)(\alpha+\beta+2n+1)}\delta_{mn}
$$
where $P^{(\alpha,\beta)}_{n}(x)$ is the Jacobi polynomial of degree $n$, 
and $\alpha,\beta>-1$.  
\item Three term recurrence relations: 
$$
P^{(\alpha,\beta)}_{n}(x)=\left[\frac{(2n+\alpha+\beta-1)(2n+\alpha+\beta)}{2n(n+\alpha+\beta)}x
-\frac{(\beta^2-\alpha^2)(2n+\alpha+\beta-1)}{2n(n+\alpha+\beta)(2n+\alpha+\beta-2)}\right]P^{(\alpha,\beta)}_{n-1}(x)
$$
$$\hspace{12mm}
-\frac{2(n+\alpha-1)(n+\beta-1)(2n+\alpha+\beta)}{2n(n+\alpha+\beta)(2n+\alpha+\beta-2)}P^{(\alpha,\beta)}_{n-2}(x), 
\quad n\geq 1
\quad P^{(\alpha,\beta)}_{0}(x)=1.
$$
\end{itemize}

\section*{Acknowledgements}
\par
We thank the referee for pointing out the two cases of QM models in our examples, 
which was addressed in Remark \ref{rmk_R}. 
%

\section*{References}

\end{document}